\journal{ }
\def\T{{ \mathrm{\scriptscriptstyle T} }}
\newdefinition{rmk}{Remark}
\newproof{proof}{Proof}
\newproof{pot}{Proof of Theorem \ref{thm2}}
\newtheorem{prop}{Proposition}[section]
\tikzset{
  font={\fontsize{11pt}{12}\selectfont}}
\def\T{{ \mathrm{\scriptscriptstyle T} }}
\begin{document}

\begin{frontmatter}

\title{Bayesian inference on group differences in multivariate categorical data}

%% Group authors per affiliation:
\author[]{Massimiliano Russo\corref{mycorrespondingauthor}}
\cortext[mycorrespondingauthor]{Corresponding author}
\ead{russo@stat.unipd.it}

\author[]{Daniele Durante}
\ead{durante@stat.unipd.it}
\author[]{Bruno Scarpa}
\ead{scarpa@stat.unipd.it}
\address{Department of Statistical Sciences, University of Padova, 35121 Padova, Italy}

\begin{abstract}
Multivariate categorical data are common in many fields. We are motivated by election polls studies assessing evidence of changes in voters opinions with their candidates preferences in the 2016 United States Presidential primaries or caucuses. Similar goals arise routinely in several applications, but current  literature lacks a general methodology which combines flexibility, efficiency, and tractability in testing for group differences in multivariate categorical data at different---potentially complex---scales. We address this goal by leveraging a Bayesian representation which factorizes the joint probability mass function for the group variable and the multivariate categorical data as the product of the marginal probabilities for the groups, and the conditional probability mass function of the multivariate categorical data, given the group membership. To enhance flexibility, we define  the conditional probability mass function of the multivariate categorical data via a group-dependent mixture of tensor factorizations, thus facilitating dimensionality reduction and borrowing of information, while providing tractable procedures for computation, and accurate tests assessing global and local group differences. We compare our methods with popular competitors, and discuss improved performance in simulations and in American election polls studies.
\end{abstract}

\begin{keyword}
Bayesian hypothesis testing \sep
Election poll \sep 
Multivariate categorical data \sep
Tensor factorization
\end{keyword}

\end{frontmatter}

\section{Introduction}
\label{sec1}
Multivariate categorical data arise frequently in relevant fields of application. Notable examples include epidemiology  \citep[e.g.][]{landis1988}, psychology \citep[e.g.][]{muthen1981}, social science  \citep[e.g.][]{santos2015}, and business intelligence \citep[e.g.][]{vermunt2004}---among others. In such settings it is increasingly common to observe a vector of categorical responses for each subject, along with a qualitative variable indicating membership to a specific group. For example, in psychological studies a vector of categorical traits is typically measured for each individual, and the focus is on studying  differences in these traits across groups, such as gender or level of education  \citep[e.g.][]{shao2014}. We are specifically motivated by election polls studies measuring changes in voters opinions with their preferences for the Presidential candidates, expressed  in the primaries or caucuses of the 2016 United States Presidential elections. These elections have attracted a considerable interest by the political scientists---mainly due the striking and partially unpredicted results---thereby motivating ongoing attempts to understand the determinants underlying the final outcomes. Most of the available political analyses provide qualitative explanations for the effect of the media, and the effectiveness of the different campaigns and supported policies---among others. Refer to  \citet{lil2016} for a careful summary of the most valuable studies and comments. 

{\setstretch{1.0}
\begin{table}
\centering
\def~{\hphantom{0}}
\caption{Opinions on several political topics collected from voters during the 2016 American
    national elections, along with their preference for Hillary Clinton or Bernie Sanders in the 2016 Democratic Presidential primaries.}{%
\begin{tabular}{llll}
%\\
 \\
& VOTER 15 & VOTER 16 &  $\ldots$ \\
\hline
{\bf Vote primaries} $x_i$  & \texttt{Hillary Clinton} & \texttt{Bernie Sanders} & $\ldots$ \\
\hline
 \multicolumn{4}{c}{{\bf Political opinions} ${\boldsymbol y}_i=(y_{i1}, \ldots, y_{ip})^\T$}  \\
\hline
{\textsc{Clinton}} &  &  &      \\
\hline
{\textsc{Feel angry}} & \texttt{Never} & \texttt{Never} & $\ldots$\\
$\ldots$ & $\ldots$ & $\ldots$ & $\ldots$\\
{\textsc{Feel disgusted}} &  \texttt{Never} & \texttt{Never}& $\ldots$\\
\hline
{\textsc{Leadership}} & \texttt{Extremely well} & \texttt{Very well}  & $\ldots$\\
$\ldots$ & $\ldots$& $\ldots$ & $\ldots$ \\
{\textsc{Speaks ming}} & \texttt{Extremely well} & \texttt{Very well} & $\ldots$\\
\hline
{\textsc{Trump}}&  &  &      \\
\hline
{\textsc{Feel angry}} & \texttt{Always} & \texttt{About half the time} & $\ldots$\\
$\ldots$ & $\ldots$ & $\ldots$ & $\ldots$\\
{\textsc{Feel disgusted}} & \texttt{Always} & \texttt{Always} & $\ldots$\\
\hline
{\textsc{Leadership}} &  \texttt{Not well at all} & \texttt{Not well at all}  & $\ldots$\\
$\ldots$ & $\ldots$& $\ldots$ & $\ldots$ \\
{\textsc{Speaks ming}} & \texttt{Extremely well} & \texttt{Very well} & $\ldots$\\
\hline
\end{tabular}}
\label{tab_1}
\end{table}}

Although all the above explanations allow important insights, quantitative assessments providing empirical evidence of the suggested conclusions in the light of the observed polls data, are fundamental to improve the current understanding of the determinants underlying the 2016 United States Presidential elections. However, such contributions are still lacking. This is mainly due to the only recent availability of relevant datasets, along with the broad variability of the research interests characterizing the 2016 United States Presidential elections. In this contribution, our overarching goal is to assess evidence of differences in political opinions between the subset of voters who chose Hillary Clinton as Presidential candidate, and the one opting for Bernie Sanders in the 2016 Democratic Presidential primaries. There is, in fact, a common perception in the media that  Bernie Sanders may have been a more effective candidate for the Democratic party in the Presidential campaign against Donald Trump   \citep[e.g.][]{lil2016}. 

As shown in Table \ref{tab_1}, we address the above goal with a main interest on how the voters feelings toward Hillary Clinton and Donald Trump, along with their evaluations on specific personality traits of the two Presidential candidates, change between Hillary Clinton and Bernie Sanders voters  in the 2016 Democratic Presidential primaries. The data are obtained from the American National Election Studies available at {\url{http://electionstudies.org/}}, and comprise five different feelings  along with five specific personality traits for each of the two Presidential candidates, thereby providing a total of $p=20$ categorical opinions collected for each unit. There are $n_1=567$ voters who expressed their preference for Hillary Clinton, and $n_2=386$ voters who chose Bernie Sanders in the 2016  primaries. 

According to Table \ref{tab_1}, it is not clear---a priori---whether there exist group differences in the voters opinions, and, if present, whether these differences  are found in the entire vector of the $p=20$ categorical variables, or only on a subset of the marginals or higher-order structures---including the bivariates, and more complex joint combinations. Obtaining statistical evidence of these differences at multiple scales, can provide interesting insights on how marginal, bivariate, or more complex joint opinions of the voters change with their preference for Hillary Clinton or Bernie Sanders in the 2016 Democratic Presidential primaries. However, as discussed in Section \ref{lit} below, the available literature lacks---to our knowledge---a general methodology to test for group differences in multivariate categorical data at multiple scales under a single statistical model which combines flexibility, efficiency, and tractability. 

To cover this gap, we propose in Section \ref{sec2} a flexible dependent mixture of tensor factorizations, which allows the joint probability mass function for the multivariate categorical data to be unknown, and changing with the groups via a set of group-dependent mixing probabilities. The proposed representation allows substantial dimensionality reduction and efficient borrowing of information in sparse tables, while providing a simple test for global group differences in the entire probability mass function, based on a flexible formulation which reduces model misspecification issues. Taking a Bayesian approach to inference, we define the prior distributions for the parameters in the proposed statistical model to guarantee  full support, and incorporate automatic multiplicity control when testing for local group differences in the marginals, bivariates, and---potentially---more complex combinations. As discussed in Section \ref{sec3}, posterior inference is available via a simple Gibbs sampler which incorporates  the global test, and provides tractable methods for the local tests via a model-based version of the Cramer's \textsc{v} coefficient outlined in Section \ref{test_sec}. The advantages associated with the proposed methods are empirically described via simulations in Section \ref{sec4}, and compared to popular competitors. Finally, Section \ref{sec5} provides results for the motivating application on recent American election polls data.

\subsection{Literature review}
\label{lit}
There is a wide interest in studying differences in political opinions across groups of voters defined by  gender \citep[e.g.][]{atk2003}, race \citep[e.g.][]{brown2009},  and affiliation  party \citep[e.g.][]{finkel1985}---among others. In accomplishing this goal, a widely used approach proceeds by summarizing the multivariate categorical data into a single latent class membership variable, while testing for group differences in these latent classes \citep{bolck2004}. Although the latent class analysis provides a useful simplification, the set of procedures required to perform the above test are subject to systematic bias, and it is still an active area of research to improve this methodology \citep[e.g.][]{vermunt2010}.

An alternative procedure is to avoid data reduction by assessing evidence of group differences in each categorical variable via separate $\chi^2$ tests, while accounting for multiple testing via false discovery rate control \citep[e.g][]{benjamini1995}. These methodologies do not incorporate dependence structures among the $p$ categorical variables, and therefore have low power. \citet{pesarin2010} addressed this issue via permutation tests preserving the dependence structure in the multivariate categorical data. Although this contribution provides a possible solution, the proposed methods cannot capture differences that go beyond changes in the marginals of a multivariate categorical variable, thus leading to inaccurate insights when the group differences are in higher-order structures.

To avoid the above robustness issues, one possibility is to define a test based on a provably flexible representation for the probability mass function of the multivariate categorical data. Log-linear models  \citep[e.g.][]{agresti2013} represent a popular class of procedures, but are characterized by an explosion in the set of possible interactions when the number of variables increases. Indeed, even in moderate $p$ settings, the number of parameters required to fully characterize the joint probability mass function is massively larger than the sample size $n$, thereby leading to inaccurate inference on group differences in the entire joint probability mass function, and higher--order dependence structures among the categorical variables. To explicitly incorporate sparsity in log-linear models, \citet{ntzoufras2000}, and \citet{nardi2008} proposed a Bayesian stochastic search, and a group-lasso algorithm, respectively, for model selection. However, when  $p$ is moderate to large, these procedures require restrictions for computational tractability, potentially affecting flexibility. As a result inference can be a cumbersome task in high-dimensional settings.

In order to address the issues associated with log-linear models, an alternative recent literature avoids pre-specifying graphical parametric structures in the multivariate categorical data, but leverages instead tensor factorizations.  \citet{dunsonxing2009} proposed a Bayesian nonparametric representation which defines the probability mass function for the multivariate categorical random vector as a mixture of products of multinomial distributions. This factorization induces a provably flexible statistical model, which incorporates dimensionality reduction, and borrows information between the cell probabilities in sparse tables to provide efficient inference on the entire joint probability mass function. Notable recent generalizations of the model proposed by \citet{dunsonxing2009} incorporate additional sparsity \citep{Zhou2014}, dynamic patterns  \citep{kunihama2013}, classification of univariate outcomes \citep{yang2015}, data imputation \citep{fosdick2016, mur2016}, and inference in case-control studies  with several categorical predictors  \citep{zhou2015}. Refer to \cite{john2014} for a theoretical justification, and connections with log-linear models.

\citet{yang2015}, and \citet{zhou2015} focused on the conditional distribution of a univariate response, with the categorical data acting as predictors. Consistent with the discussion in Section \ref{sec1}, we consider instead the dual problem, assessing evidence of group differences in the entire probability mass function of a multivariate categorical random variable. This is accomplished by factorizing the joint probability mass function for the group variable and the multivariate categorical data, as the product of the marginal probabilities for the groups and the conditional probability mass function of the multivariate categorical data, defined via a group-dependent mixture of tensor factorizations. As discussed in Section \ref{sec2}, this formulation is flexible and tractable, facilitating accurate global and local testing.

\section{Dependent mixture of tensor factorizations for multivariate categorical data}\label{sec2}
Let  ${\boldsymbol{y}}_i=(y_{i1}, \ldots, y_{ip})^{\T} \in \mathcal{Y} =(1, \ldots, d_1) \times \cdots \times (1, \ldots, d_p)$ denote the vector of categorical data observed for the statistical unit $i$, and $x_i \in \mathcal{X} = (1, \ldots, d_x )$ its corresponding group, for  each unit $i=1, \ldots, n$. We seek a provably flexible representation for the joint probability mass function $\boldsymbol{\pi}_{\boldsymbol{Y},X}=\{\pi_{\boldsymbol{Y},X}(\boldsymbol{y},x)=\mbox{pr}(\boldsymbol{Y}=\boldsymbol{y},X=x): \boldsymbol{y} \in   \mathcal{Y}, x  \in  \mathcal{X}\}$ underlying data $(\boldsymbol{y}_1,x_1), \ldots, (\boldsymbol{y}_n,x_n)$,  which facilitates accurate testing of independence between the random variables $\boldsymbol{Y}$ and $X$. This goal can be formally addressed by assessing evidence against the null hypothesis
\begin{eqnarray}
H_0: \pi_{\boldsymbol{Y},X}(\boldsymbol{y},x) =  \pi_{\boldsymbol{Y}}(\boldsymbol{y})\pi_{X}(x), \quad \mbox{for all $\boldsymbol{y} \in   \mathcal{Y}$ and  $x  \in  \mathcal{X}$},
\label{eq1}
\end{eqnarray}
versus the alternative
\begin{eqnarray}
\ \  \ H_1: \pi_{\boldsymbol{Y},X}(\boldsymbol{y},x) \neq  \pi_{\boldsymbol{Y}}(\boldsymbol{y})\pi_{X}(x), \quad \mbox{for some $\boldsymbol{y} \in   \mathcal{Y}$ and  $x  \in  \mathcal{X}$},
\label{eq2}
\end{eqnarray}
with  $\boldsymbol{\pi}_{\boldsymbol{Y}}=\{\pi_{\boldsymbol{Y}}(\boldsymbol{y})=\mbox{pr}(\boldsymbol{Y}=\boldsymbol{y}): \boldsymbol{y} \in   \mathcal{Y}\}$ and $\boldsymbol{\pi}_{X}=\{\pi_{X}(x)=\mbox{pr}(X=x): x  \in  \mathcal{X}\}$ denoting the unconditional probability mass functions of $\boldsymbol{Y}$ and $X$, respectively.

In order to accurately test the system of hypotheses \eqref{eq1}--\eqref{eq2}, and avoid issues arising from model misspecification, it is important to develop a representation for $\boldsymbol{\pi}_{\boldsymbol{Y},X}$ which is sufficiently general to approximate any possible probability mass function in the $|\mathcal{Y} \times \mathcal{X}|-1$ dimensional simplex $\mathcal{P}_{|\mathcal{Y} \times \mathcal{X}|-1}$. For instance, restrictive representations for the joint probabilistic process associated with $\boldsymbol{Y}$, are expected to fail in detecting group differences in complex structures of the multivariate categorical random variable, beyond those incorporated by the assumed statistical model for $\boldsymbol{\pi}_{\boldsymbol{Y},X}$. To avoid this issue, without relying on excessively parameterized models, we express $\pi_{\boldsymbol{Y},X}(\boldsymbol{y},x)$ as
\begin{eqnarray}
\pi_{\boldsymbol{Y},X}(\boldsymbol{y},x)=\pi_{\boldsymbol{Y} \mid X=x}(\boldsymbol{y})\pi_{X}(x), \quad \mbox{for every $\boldsymbol{y} \in   \mathcal{Y}$ and $x  \in  \mathcal{X}$},
\label{eq3}
\end{eqnarray}
with the conditional probability mass function of $\boldsymbol{Y}$ given $X=x$ factorized as a dependent mixture of products of multinomial distributions, obtaining
\begin{eqnarray}
\pi_{\boldsymbol{Y} \mid X=x}(\boldsymbol{y})=  \sum_{h=1}^H \nu_{hx} \prod_{j=1}^p \pi_{hj}(y_j), \quad \mbox{for every $\boldsymbol{y} \in   \mathcal{Y}$ and $x  \in  \mathcal{X}$},
\label{eq4}
\end{eqnarray}
where $\pi_{hj}(y_j)$ is the probability that the categorical random variable $Y_j$ assumes value $y_j$ in mixture component $h$, for each $y_j \in (1, \ldots, d_j)$, $j=1, \ldots, p$ and $h=1, \ldots, H$, while $\boldsymbol{\nu}_x=(\nu_{1x}, \ldots, \nu_{Hx}) \in \mathcal{P}_{H-1}$ are vectors of mixing probabilities specific to each group $x\in (1, \ldots, d_x)$.  Representation \eqref{eq3}--\eqref{eq4} has several benefits. In particular factorization \eqref{eq3} allows inference on changes in the multivariate random variable $\boldsymbol{Y}$ across the groups defined by $X$, with the conditional probability mass functions $\boldsymbol{\pi}_{\boldsymbol{Y} \mid X=x}=\{\pi_{\boldsymbol{Y} \mid X=x}(\boldsymbol{y})=\mbox{pr}(\boldsymbol{Y}=\boldsymbol{y} \mid X=x): \boldsymbol{y} \in   \mathcal{Y}\}$ fully characterizing such variations for each group  $x\in (1, \ldots, d_x)$. Equation \eqref{eq4} generalizes instead unconditional tensor factorization representations  \citep{dunsonxing2009}  to provide a tractable model for the probability mass function of $\boldsymbol{Y}$, which is additionally allowed to flexibly change across the groups $x\in (1, \ldots, d_x)$ via a set of group-specific mixing probabilities. Moreover, the conditional independence assumption among the $p$ categorical variables within each mixture component, allows substantial dimensionality reduction for tractable inference, while incorporating effective borrowing of information. This facilitates modeling of higher-order structures  in sparse tables,  and shrinking towards low-rank representations which are allowed to vary across groups via group-specific mixing probabilities $\boldsymbol{\nu}_x$. 

As discussed in Proposition \ref{prop1}, considering  a conditional independence assumption within each mixture component, and accounting for group-dependence only in the mixing probabilities, do not affect flexibility and incorporates borrowing of information across  the shared mixture components, along with tractable tests of independence between $\boldsymbol{Y}$ and $X$---as we will discuss in Section \ref{test_sec}.

\begin{prop}
 Any  $\boldsymbol{\pi}_{\boldsymbol{Y},X} \in \mathcal{P}_{|\mathcal{Y} \times \mathcal{X}|-1}$ can be factorized as in \eqref{eq3}--\eqref{eq4} for some $H$.
  \label{prop1}
\end{prop}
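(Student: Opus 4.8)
The plan is to establish the two factorizations separately. The outer factorization \eqref{eq3} is immediate from the definition of conditional probability: for every group with $\pi_X(x)>0$ I would set $\pi_{\boldsymbol{Y}\mid X=x}(\boldsymbol{y})=\pi_{\boldsymbol{Y},X}(\boldsymbol{y},x)/\pi_X(x)$ and $\pi_X(x)=\sum_{\boldsymbol{y}}\pi_{\boldsymbol{Y},X}(\boldsymbol{y},x)$, while for any group with $\pi_X(x)=0$ the conditional can be defined arbitrarily (say, uniformly) without affecting the joint. So the real content is showing that each resulting conditional pmf $\pi_{\boldsymbol{Y}\mid X=x}$ admits the mixture-of-products representation \eqref{eq4} using a single, group-independent collection of components $\{\pi_{hj}\}$.

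The key idea is to saturate the mixture. I would take $H=|\mathcal{Y}|=\prod_{j=1}^p d_j$ and index the $H$ components by the configurations $\boldsymbol{y}^\ast=(y_1^\ast,\ldots,y_p^\ast)\in\mathcal{Y}$. For the component indexed by $\boldsymbol{y}^\ast$ I would define each $\pi_{hj}$ to be the degenerate point mass $\pi_{hj}(y_j)=\mathbf{1}(y_j=y_j^\ast)$. Then the product over $j$ collapses to an indicator, $\prod_{j=1}^p\pi_{hj}(y_j)=\mathbf{1}(\boldsymbol{y}=\boldsymbol{y}^\ast)$, so that component $h$ places all its mass on the single cell $\boldsymbol{y}^\ast$. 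Crucially, these components depend only on $\mathcal{Y}$ and not on the target pmf, so the same collection can be reused for every group $x$; this is the analogue, for shared components, of the saturated tensor factorization of \citet{dunsonxing2009}.

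With the components fixed, I would set the group-specific mixing weights to the conditional probabilities themselves, $\nu_{hx}=\pi_{\boldsymbol{Y}\mid X=x}(\boldsymbol{y}^\ast)$ for the component $h$ corresponding to $\boldsymbol{y}^\ast$. These are nonnegative and sum to one over $h$, so $\boldsymbol{\nu}_x\in\mathcal{P}_{H-1}$ as required. Substituting into \eqref{eq4}, the sum over $h$ --- equivalently over configurations $\boldsymbol{y}^\ast$ --- collapses to the single term with $\boldsymbol{y}^\ast=\boldsymbol{y}$, yielding $\pi_{\boldsymbol{Y}\mid X=x}(\boldsymbol{y})$ and thus reproducing each conditional exactly; combined with \eqref{eq3} this recovers the arbitrary joint $\boldsymbol{\pi}_{\boldsymbol{Y},X}$.

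There is no genuine analytic obstacle, since the construction is explicit; the only point worth flagging is conceptual rather than technical. One must resist letting the components vary with $x$: because the point-mass components span every cell of $\mathcal{Y}$ regardless of the distribution being represented, all group dependence can be pushed into the weights $\boldsymbol{\nu}_x$, which is exactly the structure \eqref{eq4} demands and precisely what justifies reducing the independence test \eqref{eq1}--\eqref{eq2} to the weight-equality test \eqref{eq5}--\eqref{eq6}. The bound $H=\prod_{j=1}^p d_j$ is far from sharp, but since the proposition only asserts existence of some $H$, I would not attempt to optimize it.
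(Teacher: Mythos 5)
Your construction is correct, and it takes a genuinely different route from the paper's proof. The paper does not build the factorization explicitly: it invokes Corollary 1 of \citet{dunsonxing2009} separately for each conditional $\boldsymbol{\pi}_{\boldsymbol{Y}\mid X=x}$, obtaining $k$ group-specific mixtures $\sum_{h_x=1}^{H_x}\nu_{h_x}\prod_{j}\pi_{h_x j}(y_j)$ whose component collections may differ across groups, and then manufactures the required \emph{shared} collection $\{\boldsymbol{\pi}_{hj}\}$ by taking the union of all distinct components appearing in at least one group and zero-padding the weights: $\nu_{hx}=\nu_{h_x}$ when component $h$ occurs in group $x$'s factorization, and $\nu_{hx}=0$ otherwise. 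Your saturated construction instead makes sharing automatic: the point-mass components $\pi_{hj}(y_j)=\mathbf{1}(y_j=y_j^\ast)$, indexed by the cells $\boldsymbol{y}^\ast\in\mathcal{Y}$, are universal (independent of the distribution being represented), so no merging step is needed and the argument is self-contained and elementary, with the added care of handling groups with $\pi_X(x)=0$. What the paper's route buys in exchange for the external citation is adaptivity in $H$: its merged collection has size at most the number of distinct components across the $k$ per-group factorizations, which can be far smaller than your saturated $H=\prod_{j=1}^p d_j$ when the conditionals have low nonnegative rank --- a point of some relevance since the paper later treats $H$ as a quantity to be bounded conservatively in practice. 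One small caveat: your closing remark that the construction ``precisely justifies'' reducing the test \eqref{eq1}--\eqref{eq2} to \eqref{eq5}--\eqref{eq6} is a bit loose; the reduction also relies on the converse direction (equal weights imply independence, and independence admits a representation with equal weights), which is not itself part of Proposition \ref{prop1}, though this is tangential and does not affect the validity of your proof.
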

\begin{proof}
Since it is always possible to factorize $\pi_{\boldsymbol{Y},X}(\boldsymbol{y},x)$ as $\pi_{\boldsymbol{Y},X}(\boldsymbol{y},x)=\pi_{\boldsymbol{Y} \mid X=x}(\boldsymbol{y})\pi_{X}(x)$, for each $\boldsymbol{y} \in   \mathcal{Y}$ and $x  \in  \mathcal{X}$, Proposition \ref{prop1} holds if any collection of conditional probability mass functions $\boldsymbol{\pi}_{\boldsymbol{Y} \mid X=x}=\{\pi_{\boldsymbol{Y} \mid X=x}(\boldsymbol{y})=\mbox{pr}(\boldsymbol{Y}=\boldsymbol{y} \mid X=x): \boldsymbol{y} \in   \mathcal{Y}\}$, admits representation \eqref{eq4}, for every $x \in (1, \ldots, d_x)$.  Adapting Corollary 1 in \citet{dunsonxing2009}, it is always possible to separately represent each $\pi_{\boldsymbol{Y} \mid X=x}(\boldsymbol{y})$ as 
$$\pi_{\boldsymbol{Y} \mid X=x}(\boldsymbol{y})=  \sum_{h_x=1}^{H_x} {\nu}_{h_x} \prod_{j=1}^p \pi_{h_xj}(y_j), \quad \boldsymbol{y} \in \mathcal{Y},$$
for every group $x \in(1, \ldots, d_x)$, with $\pi_{h_xj}(y_j)$ denoting the probability that the categorical random variable $Y_j$ assumes value $y_j$ in mixture component $h_x$, given that $X=x$. Hence, the proof follows after defining each $\boldsymbol{\pi}_{hj}$ for $h=1, \ldots, H$ and $j=1, \ldots, p$ as the sequence of unique component-specific probability mass functions appearing in the above separate factorizations for at least one group $x\in \mathcal{X}$. Consistent with this representation, the associated group-specific mixing probabilities will be $\nu_{hx}= {\nu}_{h_x}$ if $\boldsymbol{\pi}_{hj}=\boldsymbol{\pi}_{h_xj}$, $j=1, \ldots, p$ and $\nu_{hx}=0$, otherwise, proving Proposition \ref{prop1}. \qed
\end{proof}

Proposition \ref{prop1} holds for some $H$, which is typically unknown. However, since the set $\mathcal{Y} \times \mathcal{X}$ has finitely many elements, $H$ admits an upper bound $\bar{H}$ which is finite. Hence, we fix $\bar{H}$ at a conservative threshold, and perform Bayesian inference leveraging priors for $\boldsymbol{\nu}_x$, $x \in (1, \ldots, d_x)$ which allow adaptive deletion of unnecessary mixture components not required to characterize the data \citep[e.g.][]{rousseau2011}. If all the mixture components are occupied after performing posterior computation, this suggests that $\bar{H}$ should be increased. 

Large $\bar{H}$ is typically required in situations when the underlying dependence structure is complex, compared to the sample size. In these contexts \citet{bhattacharya2012}, and  \citet{john2014} proposed a generalization of \citet{dunsonxing2009} which allows the latent class indicator variable to be multivariate, in order to improve flexibility, without necessarily relying on a large number of mixture components. Although our model can be generalized to these representations, we obtained good performance in simulations and applications also under a factorization adapting the model in \citet{dunsonxing2009}. Therefore, we leverage their building-block representation which is interpretable, computationally tractable, and allows simple testing---as discussed in the subsequent Section \ref{test_sec}.

\subsection{Model interpretation and hypothesis testing at different scales}
\label{test_sec}

\begin{figure}[t]
\centering
\begin{tikzpicture}[scale=1.1, transform shape]
\tikzstyle{main}=[circle, minimum size = 13mm, thick, draw =black!80, node distance = 16mm]
\tikzstyle{connect}=[-latex, thick]
\tikzstyle{box}=[rectangle, draw=black!100]
  \node[main, fill = white!100] (theta) {$\nu_{hx}$ };
  \node[main] (z) [right=of theta] {$z_i$};
    \node[main,fill = black!10] (d) [below=of z] {$x_i$ };
        \node[main] (p) [left=of d] {$\boldsymbol{\pi}_X$ };
  \node[main, fill = black!10] (w) [right=of z] {$\boldsymbol{y}_i$ };
    \node[main] (psi) [right=of w] {$\boldsymbol{\pi}_{hj}$};
      \path        (theta) edge [connect] (z)
        (z) edge [connect] (w)
        (d) edge [connect] (z)
                (p) edge [connect] (d)
                      (psi) edge [connect] (w);
  \node[rectangle, inner sep=6.5mm, draw=black!100, fit = (theta)] {};
    \node[rectangle, inner sep=6.5mm, draw=black!100, fit = (psi)] {};
        \node[rectangle, inner sep=12mm, draw=black!100, fit = (psi)] {};
                \node[rectangle, inner sep=12mm, draw=black!100, fit = (theta)] {};
  \node[] at (0.18,-0.97) {{\small{$x \in (1, \ldots, d_x)$}}};
    \node[] at (0.8,-1.7) {{\small{$h=1, \ldots, H$}}};
    \node[] at (9,-1) {{\small{$j=1, \ldots, p$}}};
        \node[] at (9.58,-1.7) {{\small{$h=1, \ldots, H$}}};
\end{tikzpicture}
\caption{Graphical representation of the mechanism to generate data $(\boldsymbol{y}_i,x_i)$ from model \eqref{eq3}--\eqref{eq4}.}\label{F_1}
\end{figure}
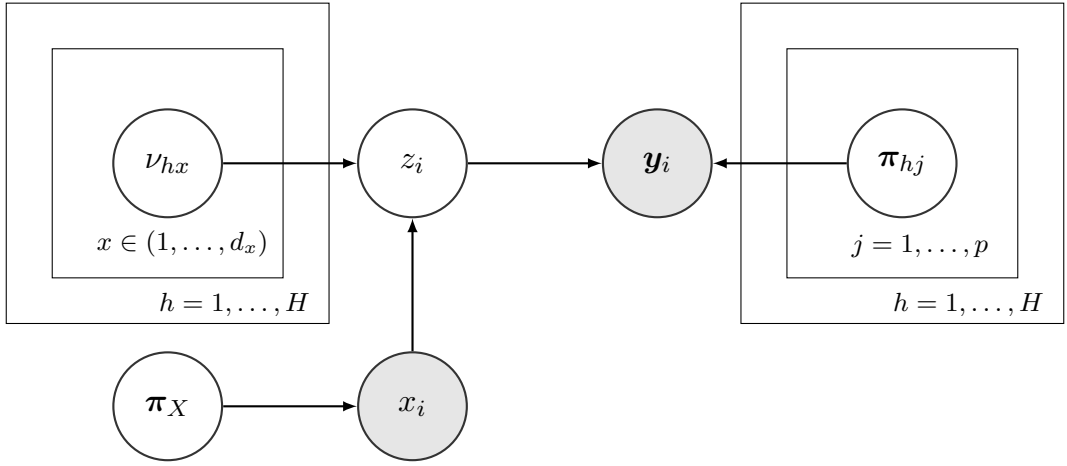

Figure \ref{F_1} provides a graphical representation of the probabilistic generative mechanism associated with our representation of the joint probability mass function $\boldsymbol{\pi}_{\boldsymbol{Y},X}$ via \eqref{eq3}--\eqref{eq4}. According to Figure \ref{F_1}, the group membership $x_i$ is simply generated from the univariate categorical random variable $X$ with unconditional probability mass function $\boldsymbol{\pi}_X$. Conditionally on the group membership $x_i$, the data $\boldsymbol{y}_i$ are instead generated from the multivariate random vector with conditional probability mass function $\boldsymbol{\pi}_{\boldsymbol{Y} \mid X=x_i}$ factorized as in \eqref{eq4}. In accomplishing this goal, a latent class variable $z_i \in (1, \ldots, H)$ is first generated from a categorical variable with probability mass function $\boldsymbol{\nu}_{x_i}$. Then, given $z_i=h$, the entries $y_{ij}$ of $\boldsymbol{y}_i$ are generated from conditionally independent categorical random variables with probability mass function $\boldsymbol{\pi}_{hj}=\{\pi_{hj}(y_j): y_j \in (1, \ldots, d_j)\}$ for every $j=1, \ldots, p$. This hierarchical representation provides key benefits in terms of computational tractability as discussed in Section \ref{post_comp}, while substantially reducing dimensionality, and providing a simple test for global group differences in the multivariate categorical random variable. In fact, it is easy to show that under  \eqref{eq3}--\eqref{eq4} for  $\boldsymbol{\pi}_{\boldsymbol{Y},X}$,  the system \eqref{eq1}--\eqref{eq2} reduces to the simpler test  assessing evidence against 
\begin{eqnarray}
H_0: (\nu_{11}, \ldots, \nu_{H1}) = \cdots  =(\nu_{1d_x}, \ldots, \nu_{Hd_x}),
\label{eq5}
\end{eqnarray}
versus the alternative
\begin{eqnarray}
H_1:  (\nu_{1x}, \ldots, \nu_{Hx})  \neq (\nu_{1x'}, \ldots, \nu_{Hx'}), \ \ \ \ \ \ \
\label{eq6}
\end{eqnarray}
for some $x \neq x'$. This test substantially improves tractability, without affecting accuracy. In fact, according to the aforementioned Proposition \ref{prop1}, the system \eqref{eq5}--\eqref{eq6} leverages a representation of  $\boldsymbol{\pi}_{\boldsymbol{Y},X}$   which is provably general, and therefore reduces concerns arising from model misspecification. 

Rejection of the global null in the system \eqref{eq5}--\eqref{eq6} provides evidence of group differences in the multivariate categorical random variable $\boldsymbol{Y}$. However, such changes may be attributable to several structures.  Consistent with this discussion, we additionally consider local analyses assessing evidence of group differences in each marginal $Y_j$ of $\boldsymbol{Y}$, $j=1, \ldots, p$, and in the bivariates of each  pair $(Y_j,Y_{j'})$, for every $j=1, \ldots, p$ and $j'=1, \ldots,p$, $j'\neq j$. 

We address this aim by relying on a test which leverages the model-based version of the Cramer's \textsc{v} coefficient. Specifically, we assess evidence of group differences in each marginal $Y_j$, for $j=1, \ldots, p$, by studying the coefficients
\begin{eqnarray}
\rho_j=\left[\frac 1  {\min\{d_x,d_j\}-1} \sum_{x=1}^{d_x} \sum_{y_j=1}^{d_j} \frac{\{\pi_{Y_j,X}(y_j,x) - \pi_{Y_j}(y_j) \pi_{X}(x) \}^2 } { \pi_{Y_j}(y_j) \pi_{X}(x) }\right]^{\frac{1}{2}},\label{eq8}
\end{eqnarray}
for each $j=1, \ldots, p$, where $\pi_{Y_j}(y_j)$ denotes $\mbox{pr}(Y_j=y_j)$, whereas $\pi_{Y_j,X}(y_j,x)=\mbox{pr}(Y_j=y_j,X=x)=\mbox{pr}(Y_j=y_j \mid X=x)\mbox{pr}(X=x)=\pi_{Y_j \mid X=x}(y_j) \pi_{X}(x)$ for every $y_j \in (1, \ldots, d_j)$ and group $x \in (1, \ldots, d_x)$. Measuring the association between $Y_j$ and $X$ with $\rho_j \in [0,1]$ provides a convenient choice for interpretation. In fact, according to \eqref{eq8}, a value of $\rho_j$ very close to $0$ provides evidence of low dependence between $Y_j$ and $X$, meaning that $Y_j$ is not expected to change across groups.

We consider a similar strategy to study group differences in the bivariate probability mass functions for every pair $(Y_j,Y_{j'})$  across the categories of $X$.  As in equation \eqref{eq8}, this is accomplished by studying the coefficient
\begin{eqnarray}
\rho_{jj'} &=& \left[
\frac{1}{\min\{d_x,d_j d_{j'}\} -1}
\sum_{x=1}^{d_x} \sum_{y_j =1}^{d_j}\sum_{y_{j'} =1}^{d_{j'}}\frac{
\{ \pi_{Y_j,Y_{j'},X}(y_j,y_{j'},x) -
\pi_{Y_j,Y_{j'}}(y_j,y_{j'})\pi_{X}(x)
\}^2
}
{
\pi_{Y_j,Y_{j'}}(y_j,y_{j'})\pi_{X}(x)
} \right]^{\frac 12},
\label{eq9}
\end{eqnarray}
for every $x \in (1, \ldots, d_x)$ and pair $(Y_j,Y_{j'})$, $j=1, \ldots, p$, $j'=1, \ldots,p$, $j'\neq j$. In equation~\eqref{eq9}, $\pi_{Y_j ,Y_{j'}, X}(y_j,y_{j'},x)$ denotes the joint probability that $Y_j$, $Y_{j'}$ and $X$ take values $y_j$, $y_{j'}$ and $x$, respectively,  while $\pi_{Y_j,Y_{j'} }(y_j,y_{j'})$ is the joint probability of  the pair $(y_j,y_{j'})$, and $\pi_X(x)$ the marginal probability to observe the group $x$. Consistent with the above discussion, a value of $\rho_{jj'}$ very close to $0$ suggests low evidence of changes in $(Y_j,Y_{j'})$ with $X$.

Beside providing simple measures for interpretable inference on local group differences, the above model-based Cramer's \textsc{v} coefficients incorporate dependence in the multiple tests via the factorization  \eqref{eq3}--\eqref{eq4}, and therefore are expected to improve power. Moreover, according to Proposition \ref{prop3}, the Cramer's \textsc{v} coefficients in \eqref{eq8} and \eqref{eq9} can be easily computed from the quantities in our model, facilitating tractable testing at multiple scales under a single model. 
\begin{prop}
Let $\mathcal{J} \subset (1, \ldots, p)$ denotes a generic subset of the indices set  $(1, \ldots, p)$, such that  $\mathcal{J} \cup \mathcal{J}^{c}=(1, \ldots, p)$, and let $\boldsymbol{Y}_{\mathcal{J}}$ denote the multivariate categorical random vector containing the variables with indices in the set $\mathcal{J}$. Then, under the factorization \eqref{eq3}--\eqref{eq4} for $\pi_{\boldsymbol{Y},X}(\boldsymbol{y},x)$, we obtain  $\pi_{\boldsymbol{Y}_{\mathcal{J}} \mid X=x}(\boldsymbol{y}_{\mathcal{J}})=\sum_{h=1}^H \nu_{hx} \prod_{j \in \mathcal{J} } \pi_{hj}(y_j)$, and $\pi_{\boldsymbol{Y}_{\mathcal{J}}}(\boldsymbol{y}_{\mathcal{J}})=\sum_{x \in \mathcal{X}} \pi_X(x)  \{\sum_{h=1}^H \nu_{hx} \prod_{j \in \mathcal{J} } \pi_{hj}(y_j)\}$.
  \label{prop3}
\end{prop}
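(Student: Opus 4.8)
The plan is to derive both identities by marginalizing the joint conditional factorization \eqref{eq4} over the coordinates whose indices lie in $\mathcal{J}^{c}$, and then summing over the groups. The whole argument rests on the conditional-independence structure embedded in \eqref{eq4}: within each mixture component $h$, the entries of $\boldsymbol{Y}$ factorize into the product $\prod_{j=1}^{p}\pi_{hj}(y_j)$, so the nuisance coordinates $\boldsymbol{y}_{\mathcal{J}^{c}}$ should integrate out cleanly.

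First I would establish the conditional identity. Starting from \eqref{eq4}, the marginal conditional probability mass function of $\boldsymbol{Y}_{\mathcal{J}}$ given $X=x$ is obtained by summing $\pi_{\boldsymbol{Y}\mid X=x}$ over all configurations of $\boldsymbol{y}_{\mathcal{J}^{c}}$,
\[
\pi_{\boldsymbol{Y}_{\mathcal{J}}\mid X=x}(\boldsymbol{y}_{\mathcal{J}})=\sum_{\boldsymbol{y}_{\mathcal{J}^{c}}}\sum_{h=1}^{H}\nu_{hx}\prod_{j=1}^{p}\pi_{hj}(y_j).
\]
Since both sums are finite, I would interchange them and split the product over $j$ into the factors indexed by $\mathcal{J}$ and those indexed by $\mathcal{J}^{c}$. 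The factors over $\mathcal{J}$ do not depend on $\boldsymbol{y}_{\mathcal{J}^{c}}$ and factor out, so the inner sum acts only on $\prod_{j\in\mathcal{J}^{c}}\pi_{hj}(y_j)$. This sum itself factorizes as $\prod_{j\in\mathcal{J}^{c}}\sum_{y_j=1}^{d_j}\pi_{hj}(y_j)$, and because each $\boldsymbol{\pi}_{hj}$ is a probability mass function every inner sum equals one; hence the whole $\mathcal{J}^{c}$-contribution collapses to unity, yielding $\pi_{\boldsymbol{Y}_{\mathcal{J}}\mid X=x}(\boldsymbol{y}_{\mathcal{J}})=\sum_{h=1}^{H}\nu_{hx}\prod_{j\in\mathcal{J}}\pi_{hj}(y_j)$, the first claimed expression.

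Then I would pass to the unconditional marginal by summing over the groups. Using the elementary decomposition $\pi_{\boldsymbol{Y}_{\mathcal{J}}}(\boldsymbol{y}_{\mathcal{J}})=\sum_{x\in\mathcal{X}}\pi_{\boldsymbol{Y}_{\mathcal{J}}\mid X=x}(\boldsymbol{y}_{\mathcal{J}})\,\pi_{X}(x)$, which is just \eqref{eq3} specialized to $\boldsymbol{Y}_{\mathcal{J}}$, and substituting the conditional identity just derived, gives the second expression directly.

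No step presents a genuine obstacle: the argument is a finite marginalization and the only points requiring care are the interchange of the two finite sums and the observation that summing each component-specific mass function over its support returns one. In other words, the result is essentially a bookkeeping consequence of the conditional independence built into the tensor factorization \eqref{eq4}, and the tractability it advertises for the Cramer's V coefficients \eqref{eq8}--\eqref{eq9} follows because every marginal and pairwise probability needed there is of exactly this form.
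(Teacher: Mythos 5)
Your proof is correct and follows essentially the same route as the paper's: marginalize the factorization \eqref{eq4} over the configurations $\boldsymbol{y}_{\mathcal{J}^c}$, exchange the finite sums, factor out the $\mathcal{J}$-terms, observe that the remaining sum of $\prod_{j\in\mathcal{J}^c}\pi_{hj}(y_j)$ over $\mathcal{Y}_{\mathcal{J}^c}$ equals one, and then obtain the unconditional statement by mixing over $X$ via \eqref{eq3}. The only cosmetic difference is that you justify the unit sum by factorizing it as $\prod_{j\in\mathcal{J}^c}\sum_{y_j}\pi_{hj}(y_j)$, whereas the paper invokes the fact that this product is itself a joint probability mass function; these are the same observation.
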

\begin{proof}
To obtain $\pi_{\boldsymbol{Y}_{\mathcal{J}} \mid X=x}(\boldsymbol{y}_{\mathcal{J}})$ we need to marginalize out in $\pi_{\boldsymbol{Y} \mid X=x}(\boldsymbol{y})$ all the configurations $\boldsymbol{y}_{\mathcal{J}^c} \in \mathcal{Y}_{\mathcal{J}^c}$. To accomplish this goal note that $\boldsymbol{Y}=(\boldsymbol{Y}_{\mathcal{J}},\boldsymbol{Y}_{\mathcal{J}^c})$ and $\boldsymbol{y}=(\boldsymbol{y}_{\mathcal{J}},\boldsymbol{y}_{\mathcal{J}^c})$. Therefore, recalling factorization \eqref{eq4}, we obtain
\begin{eqnarray*}
\pi_{\boldsymbol{Y}_{\mathcal{J}} \mid X=x}(\boldsymbol{y}_{\mathcal{J}})={}\sum_{\boldsymbol{y}_{\mathcal{J}^c} \in \mathcal{Y}_{\mathcal{J}^c}} \sum_{h=1}^{H} {\nu}_{hx}{}\prod_{j \in \mathcal{J}} \pi_{hj}(y_j)\prod_{j \in \mathcal{J}^c} \pi_{hj}(y_j)&=&{}\sum_{h=1}^{H} {\nu}_{hx} \prod_{j \in \mathcal{J}} \pi_{hj}(y_j)\left[\sum_{\boldsymbol{y}_{\mathcal{J}^c} \in \mathcal{Y}_{\mathcal{J}^c}}\prod_{j \in \mathcal{J}^c} \pi_{hj}(y_j)\right],\\
&=&\sum_{h=1}^{H} {\nu}_{hx} \prod_{j \in \mathcal{J}} \pi_{hj}(y_j),
\end{eqnarray*}where the last equality follows after noticing that $\prod_{j \in \mathcal{J}^c} \pi_{hj}(y_j)$ is the joint probability mass function of a multivariate categorical random vector with $|\mathcal{J}^c|$ independent variables and joint sample space $\mathcal{Y}_{\mathcal{J}^c}$. Therefore the summation of its joint probability mass function on the whole sample space provides $\sum_{\boldsymbol{y}_{\mathcal{J}^c} \in \mathcal{Y}_{\mathcal{J}^c}}\prod_{j \in \mathcal{J}^c} \pi_{hj}(y_j)=1$. Exploiting model \eqref{eq3} for $\pi_{\boldsymbol{Y},X}(\boldsymbol{y},x)$, the proof of $\pi_{\boldsymbol{Y}_{\mathcal{J}}}(\boldsymbol{y}_{\mathcal{J}})=\sum_{x \in \mathcal{X}} \pi_X(x)  \{\sum_{h=1}^H \nu_{hx} \prod_{j \in \mathcal{J} } \pi_{hj}(y_j)\}$ is an immediate consequence of the above derivations. \qed
\end{proof}

Although Proposition \ref{prop3} facilitates inference on group differences in many complex higher-order functionals of $\boldsymbol{Y}$, we focus on changes in interpretable local structures of relevant interest in these types of analyses. Note also that, to assess statistical evidence of group differences in these local structures we rely on the systems of interval hypotheses $H_{0j}: \rho_{j} \leq \varepsilon$ versus $H_{1j}: \rho_{j} > \varepsilon$, $j=1, \ldots, p$ for the marginals, and $H_{0jj'}: \rho_{jj'} \leq \varepsilon$ versus $H_{1jj'}: \rho_{jj'} > \varepsilon$, $j=1, \ldots, p$, $j'=1, \ldots,p$, $j'\neq j$, for the bivariates, with $\varepsilon$ an appropriately selected small threshold denoting the minimum effect size required to declare the presence of a group difference. Popular thresholds in social science studies are $\varepsilon=0.1$ and $\varepsilon=0.3$, denoting  small and moderate differences, respectively \citep[e.g.][]{king2010}. Since there is not an overall agreement in this choice, we consider an intermediate threshold $\varepsilon=0.2$, and maintain this default setting in our simulations in Section \ref{sec4}, and in the application in Section \ref{sec5}, to assess sensitivity to this choice.

\section{Bayesian inference} \label{sec3}
Although inference and hypothesis testing for the model discussed in Section \ref{sec2} can potentially proceed under different paradigms, we rely on a Bayesian treatment  of the representation \eqref{eq3}--\eqref{eq4}, and the associated testing procedures. This choice is appealing in allowing coherent uncertainty quantification, effective borrowing of information, simple inference via the posterior distribution, along with the possibility to incorporate appropriate prior distributions which facilitate automatic multiplicity control for the local tests \citep[e.g.][]{scott2010}, and adaptation of the model dimensions \citep[e.g.][]{rousseau2011}. Section \ref{prior} describes our prior specification and properties, whereas Section \ref{post_comp} provide a pseudo-code with step-by-step implementation of the tractable Gibbs sampler associated with the proposed statistical model.
 
\subsection{Prior specification and properties} \label{prior}
We define independent priors $\boldsymbol{\pi}_X \sim \Pi_{X}$, $\boldsymbol{\nu}_x \sim \Pi_{\boldsymbol{\nu}}$, $x \in (1, \ldots, d_x)$ and $\boldsymbol{\pi}_{hj} \sim \Pi_{\boldsymbol{\pi}_j}$, $j=1, \ldots, p$,  $h=1, \ldots, H$, for the quantities in  \eqref{eq3}--\eqref{eq4}  to induce a prior $\Pi$ for  $\boldsymbol{\pi}_{\boldsymbol{Y},X}$ which has full support, facilitates tractable posterior inference on the association  between $\boldsymbol{Y}$ and $X$, and incorporates shrinkage along with automatic multiplicity control.

In enhancing computational tractability we let $\Pi_{X}$ and $\Pi_{\boldsymbol{\pi}_j}$, for $j=1,\ldots, p$, correspond to conjugate Dirichlet priors, obtaining $\boldsymbol{\pi}_X \sim \mbox{Dir}(\alpha_1, \ldots, \alpha_{d_x})$, and $\boldsymbol{\pi}_{hj} \sim \mbox{Dir}(\gamma_{j1},\ldots, \gamma_{jd_j})$ independently for $j=1, \ldots, p$, and $h=1, \ldots, H$. The prior $\Pi_{\boldsymbol{\nu}}$ is instead defined to automatically incorporate the global test in  \eqref{eq5}--\eqref{eq6}. \citet{lock2015} recently addressed a related goal in order to test for equality in distribution, with a particular focus on Gaussian mixture models. We adapt their procedure to our conditional tensor factorization, obtaining
\begin{eqnarray}
\boldsymbol{\nu}_{x} &=&(1-T)\boldsymbol{\upsilon}+T\boldsymbol{\upsilon}_{x}, \ \ x \in \mathcal{X}, \nonumber  \\
\boldsymbol{\upsilon} &\sim& \mbox{Dir}(1/H, \ldots, 1/H), \quad \boldsymbol{\upsilon}_{x} \sim  \mbox{Dir}(1/H, \ldots, 1/H), \ x \in \mathcal{X},\label{eq7} \\
T &\sim& \mbox{Bern}\{\mbox{pr}(H_1)\}. \nonumber 
\end{eqnarray}
According to equations \eqref{eq7}, when $T=0$ the mixing probability vectors are forced to be equal across all groups, while if $T=1$ these vectors are allowed to be different. As shown in \cite{lock2015}, combining this prior with a flexible characterization for the kernels in the mixture model---as in our formulation---provides a provably accurate test for the equality in distribution under a general specification for the mixture components, thereby representing a valid candidate also for our methods. Moreover, by choosing small hyperparameters in the Dirichlet priors in  \eqref{eq7} we also facilitate automatic deletion of redundant mixture components \citep{rousseau2011}. 

Leveraging \eqref{eq7}, evidence against the global null hypothesis is available from $\mbox{pr}\{H_1 \mid (\boldsymbol{y}_1,x_1), \ldots, (\boldsymbol{y}_n,x_n)\}=\mbox{pr}\{T=1 \mid (\boldsymbol{y}_1,x_1), \ldots, (\boldsymbol{y}_n,x_n)\}$, which can be easily computed via the Gibbs sampler outlined in Section \ref{post_comp}---refer in particular to step 4 in Algorithm  \ref{algo}. The posterior probabilities for the local alternatives are instead available via $\mbox{pr}\{H_{1j} \mid (\boldsymbol{y}_1,x_1), \ldots, (\boldsymbol{y}_n,x_n)\}=\mbox{pr}\{\rho_{j}> \varepsilon \mid (\boldsymbol{y}_1,x_1), \ldots, (\boldsymbol{y}_n,x_n)\}$, $j=1, \ldots, p$ for the tests on the marginals, and $\mbox{pr}\{H_{1jj'} \mid (\boldsymbol{y}_1,x_1), \ldots, (\boldsymbol{y}_n,x_n)\}=\mbox{pr}\{\rho_{jj'}> \varepsilon \mid (\boldsymbol{y}_1,x_1), \ldots, (\boldsymbol{y}_n,x_n)\}$, $j=1, \ldots, p$, $j'=1, \ldots,p$,  with $j'\neq j$, for the bivariates. Note that, considering small interval local hypotheses defined via a model-based version of the Cramer's \textsc{v} coefficients, allows the proposed model to place a positive probability mass on each local null, with this probability having a prior distribution induced by $ \Pi_{X}$, $\Pi_{\boldsymbol{\nu}}$, and $\Pi_{\boldsymbol{\pi}_j}$, $j=1, \ldots, p$,  via \eqref{eq8}--\eqref{eq9}. According to \citet{scott2010}, these conditions guarantee automatic multiplicity control within a Bayesian framework, thereby providing an additional relevant benefit associated with the proposed methods.

The above discussion is further confirmed by Proposition \ref{prop2}, guaranteeing that the induced prior $\Pi$ for  $ \boldsymbol{\pi}_{ \boldsymbol{Y},X}$ via \eqref{eq3}--\eqref{eq4} has full support in the probability simplex $\mathcal{P}_{|\mathcal{Y} \times \mathcal{X}|-1}$. This is a key result to guarantee the accuracy of our inference procedures, which may display poor performance if $\Pi$ assigns zero probability to a subset of the possible true data generating processes. 
\begin{prop}
Let $ \boldsymbol{\pi}_{ \boldsymbol{Y},X}\sim \Pi$, with $\Pi$ denoting the prior for $\boldsymbol{\pi}_{ \boldsymbol{Y},X}$ induced by $\Pi_{X}$, $\Pi_{\boldsymbol{\nu}}$, $\Pi_{\boldsymbol{\pi}_1}, \ldots,  \Pi_{\boldsymbol{\pi}_p}$ via  \eqref{eq3}--\eqref{eq4}, then $\Pi\{ \boldsymbol{\pi}_{ \boldsymbol{Y},X}:\sum_{ \boldsymbol{y} \in \mathcal{Y}} \sum_{x \in \mathcal{X}} | {\pi}_{ \boldsymbol{Y},X}( \boldsymbol{y},x)- {\pi}^0_{ \boldsymbol{Y},X}( \boldsymbol{y},x) |<\epsilon\}>0$ for any $\epsilon  > 0$, and $ \boldsymbol{\pi}^0_{ \boldsymbol{Y},X} \in \mathcal{P}_{|\mathcal{Y} \times \mathcal{X}|{-}1}$.
  \label{prop2}
\end{prop}
\begin{proof}
Recalling Proposition \ref{prop1}, it is always possible to rewrite the $L_1$ distance $\sum_{\boldsymbol{y} \in \mathcal{Y}} \sum_{x \in \mathcal{X}} |\pi_{\boldsymbol{Y},X}(\boldsymbol{y},x)-\pi^0_{\boldsymbol{Y},X}(\boldsymbol{y},x) |$  between $\boldsymbol{\pi}_{\boldsymbol{Y},X}$ and $\boldsymbol{\pi}^0_{\boldsymbol{Y},X}$ as
$$\sum_{\boldsymbol{y} \in \mathcal{Y}} \sum_{x \in \mathcal{X}} | \pi_{X}(x)\sum_{h=1}^{H} {\nu}_{hx} \prod_{j=1}^p \pi_{hj}(y_j)-  \pi^0_{X}(x)\sum_{h=1}^{H} {\nu}^0_{hx} \prod_{j=1}^p \pi^0_{hj}(y_j)|, $$
with ${\nu}^0_{hx} ={\nu}^0_{h_x}$ if $\boldsymbol{\pi}^0_{hj}=\boldsymbol{\pi}^0_{h_xj}$, $j=1, \ldots, p$ and $\nu^0_{hx}=0$, otherwise, for $x \in (1, \ldots, d_x)$. Therefore the prior probability $\Pi\{\boldsymbol{\pi}_{\boldsymbol{Y},X}{:}\sum_{\boldsymbol{y} \in \mathcal{Y}} \sum_{x \in \mathcal{X}} |\pi_{\boldsymbol{Y},X}(\boldsymbol{y},x)-\pi^0_{\boldsymbol{Y},X}(\boldsymbol{y},x) |<\epsilon\}$ assigned to a neighborhood of $\boldsymbol{\pi}^0_{\boldsymbol{Y},X}$ is
\selectfont $${\int}{1}\left\{\sum_{\boldsymbol{y} \in \mathcal{Y}} \sum_{x \in \mathcal{X}} |\pi_{\boldsymbol{Y},X}(\boldsymbol{y},x)-\pi^0_{\boldsymbol{Y},X}(\boldsymbol{y},x) |<\epsilon\right\} {\rm d}\Pi_X(\boldsymbol{\pi}_X){\rm d}\Pi_{\boldsymbol{\nu}}(\boldsymbol{\nu}_x) \prod_{h=1}^H \prod_{j=1}^p{\rm d}\Pi_{\boldsymbol{\pi}_j}(\boldsymbol{\pi}_{hj}),$$with $\pi_{\boldsymbol{Y},X}(\boldsymbol{y},x)$ and $\pi^0_{\boldsymbol{Y},X}(\boldsymbol{y},x)$ factorized as above, and $1\{\cdot\}$ denoting an indicator function. Following  \citet{dunsonxing2009}, a sufficient condition for the above integral to be strictly positive is that all the above priors have full $L_1$ support on their corresponding spaces. As $\Pi_X$ and $\Pi_{\boldsymbol{\pi}_1}, \ldots, \Pi_{\boldsymbol{\pi}_p}$ are Dirichlet priors, by definition $\Pi_X$ has full $L_1$ support on the simplex $\mathcal{P}_{|\mathcal{X}|-1}$, and $\Pi_{\boldsymbol{\pi}_j}$ has full $L_1$ support on the simplex $\mathcal{P}_{d_j-1}$, for each $j=1, \ldots, p$. 

To conclude the proof we need to show that $\mbox{pr}( \sum_{x=1}^{d_x} \sum_{h=1}^H |\nu_{hx}-\nu^0_{hx}|< \epsilon_{\nu} )>0$ for every $ \epsilon_{\nu} >0$, and $(\boldsymbol{\nu}^0_{1},\ldots,\boldsymbol{\nu}^0_{d_x})$, when the group-specific mixing probabilities $\boldsymbol{\nu}_{1},\ldots,\boldsymbol{\nu}_{d_x}$ have prior $\Pi_{\boldsymbol{\nu}}$ defined as in equation \eqref{eq7}. Marginalizing out the testing indicator $T$, a lower bound for the previous probability is
\begin{eqnarray*}
\mbox{pr}(H_0)\mbox{pr}\left(\sum_{x=1}^{d_x} \sum_{h=1}^H | \upsilon_{h}-\nu^0_{hx}|<\epsilon_{\nu}\right)+\mbox{pr}(H_1) \prod_{x=1}^{d_x} \mbox{pr} \left(\sum_{h=1}^H | \upsilon_{hx}-\nu^0_{hx}|<\frac{\epsilon_{\nu}}{d_x}\right).
\label{app_1}
\end{eqnarray*}
If the true model is generated under independence between $X$ and $\boldsymbol{Y}$, the true mixing probability vectors are constant across groups, and therefore the Dirichlet priors for $\boldsymbol{\upsilon}$ and $\boldsymbol{\upsilon}_x$, $x \in (1, \ldots, d_x)$, ensure the positivity of both summands. When instead the true mixing probability vectors change across groups, the term $\mbox{pr}(H_0)\mbox{pr}(\sum_{x=1}^{d_x} \sum_{h=1}^H | \upsilon_{h}-\nu^0_{hx}|<\epsilon_{\nu})$ is no more guaranteed to be strictly positive. However $\mbox{pr}(H_1)  \prod_{x=1}^{d_x} \mbox{pr}(\sum_{h=1}^H | \upsilon_{hx}-\nu^0_{hx}|<{\epsilon_{\nu}}/{d_x})$ remains positive for every $\epsilon_{\nu}>0$, since under the alternative we assume independent Dirichlet priors $\Pi_{\boldsymbol{\upsilon}_1}, \ldots, \Pi_{\boldsymbol{\upsilon}_{d_x}}$ for the group-specific mixing probability vectors, each one having full $L_1$ support on  $\mathcal{P}_{H-1}$. \qed
\end{proof}

As $ \boldsymbol{\pi}_{ \boldsymbol{Y},X}$ is fully characterized by finitely many parameters $\{ {\pi}_{ \boldsymbol{Y},X}( \boldsymbol{y},x):  \boldsymbol{y} \in \mathcal{Y}, x \in  \mathcal{X}\}$, Proposition \ref{prop2}, also guarantees  that $\Pi\{\boldsymbol{\pi}_{\boldsymbol{Y},X}{:} \sum_{\boldsymbol{y} \in \mathcal{Y}} \sum_{x \in \mathcal{X}} |{\pi}_{\boldsymbol{Y},X}(\boldsymbol{y},x)-{\pi}^0_{\boldsymbol{Y},X}(\boldsymbol{y},x) |<\epsilon \mid (\boldsymbol{y}_1,x_1), \ldots, (\boldsymbol{y}_n,x_n)\}\rightarrow1$ for any $\epsilon>0$, almost surely when $\boldsymbol{\pi}^0_{\boldsymbol{Y},X}$ is the true probability mass function, thereby ensuring also posterior consistency.

\subsection{Posterior computation}
\label{post_comp}
Posterior computation proceeds via a simple and efficient Gibbs sampler, exploiting the hierarchical representation of model  \eqref{eq3}--\eqref{eq4}, outlined in Figure \ref{F_1}. Refer to Algorithm  \ref{algo} for a pseudo-code with detailed steps. Source \texttt{R} code, and tutorial implementations are available at {\url{https://github.com/danieledurante/GroupTensor-Test}}.
{\setstretch{1.0}
\begin{algorithm}[h!]
 \caption{Gibbs sampler for posterior computation} 
 \label{algo}
     \Begin{\vspace{3pt}
  {\bf [1]} Update the marginal probability mass function $\boldsymbol{\pi}_X$ for the group variable $X$, from the full conditional $(\boldsymbol{\pi}_X \mid -) \sim\mbox{Dir}(\alpha_1+n_{1}, \ldots, \alpha_{d_x}+n_{d_x})$, with the generic $n_x$ denoting the total number of statistical units in group $x \in (1, \ldots, d_x)$\;
\vspace{3pt}
 {\bf [2]}  Sample the latent class indicator variables $z_i \in (1, \ldots, H)$ for each unit $i$\;
 \For(){$i$ \mbox{from} $1$ to $n$}
 {
    Sample  $z_i \in (1, \ldots, H)$ from the categorical variable with probabilities $$\mbox{pr}(z_i=h \mid -)=\frac{\nu_{hx_i} \prod_{j=1}^p \pi_{hj}(y_{ij})}{\sum_{q=1}^H \nu_{qx_i} \prod_{j=1}^p \pi_{qj}(y_{ij})}, $$
for every $h=1, \ldots, H$.    }
\vspace{3pt}
 {\bf [3]}  Update the component-specific probability mass functions $\boldsymbol{\pi}_{hj}$ in equation \eqref{eq4}\;
 \For(){$h$ \mbox{from} $1$ to $H$}
 { \For(){$j$ \mbox{from} $1$ to $p$}
 {Update $\boldsymbol{\pi}_{hj}$ from  $(\boldsymbol{\pi}_{hj} \mid -)\sim\mbox{Dir}(\gamma_{j1}+n_{jh1}, \ldots, \gamma_{jd_j}+n_{jhd_j})$, with the generic $n_{jhy_j}$ denoting the number of statistical units in component $h$ having value $y_j$ for the variable $Y_j$.} }
\vspace{3pt}
{\bf [4]} Sample the testing indicator $T$ from the full conditional Bernoulli variable with
\begin{eqnarray*}
\mbox{pr}(T=1 \mid -) &=&\frac{\mbox{pr}(H_1)\prod_{x=1}^{d_x}\int (\prod_{h=1}^{H}\upsilon_{hx}^{n_{hx}})d \Pi_{\upsilon_x} }{\mbox{pr}(H_0)\int (\prod_{h=1}^{H}\upsilon_h^{n_h})d \Pi_\upsilon +\mbox{pr}(H_1)\prod_{x=1}^{d_x}\int (\prod_{h=1}^{H}\upsilon_{hx}^{n_{hx}})d \Pi_{\upsilon_x} },\nonumber\\
&=&\left[ 1+\frac{\mbox{pr}(H_0)}{\mbox{pr}(H_1)}\frac{\prod_{h=1}^H\Gamma(\frac{1}{H}+n_h)}{\Gamma(\frac{1}{H})^H\Gamma(n+1)} \prod_{x=1}^{d_x} \frac{\Gamma(\frac{1}{H})^H\Gamma(n_x+1)}{\prod_{h=1}^H\Gamma(\frac{1}{H}+n_{hx})}\right]^{-1},
\end{eqnarray*}
where $n_h$ is the total number of units in mixture component $h$, and $n_{hx}$ is the total number units in group $x$ allocated to component $h$. The above equation can be easily obtained adapting derivations in \cite{lock2015}. Exploiting the Gibbs samples for $T$, the posterior probability of the global alternative can be easily obtained as the proportion of samples in which $T=1$\;
\vspace{3pt}
{\bf [5]} Update the group-specific mixing probability vectors  $\boldsymbol{\nu}_{x}$, $x \in (1, \ldots,d_x)$\;
 \uIf{T=1}{Update each group-specific mixing probability vector $\boldsymbol{\nu}_{x}$ separately from the full conditional $(\boldsymbol{\nu}_{x} \mid -)\sim \mbox{Dir}(1/H+n_{1x}, \ldots, 1/H+n_{Hx})$, for $x \in (1, \ldots, d_x)$}
    \ElseIf{T=0}{Let $\boldsymbol{\nu}_{1}=\cdots=\boldsymbol{\nu}_{d_x}=\boldsymbol{\upsilon}$, with $\boldsymbol{\upsilon}$ updated from the full conditional distribution $(\boldsymbol{\upsilon} \mid -) \sim \mbox{Dir}(1/H+n_1, \ldots, 1/H+n_H)$\;}
    }
\end{algorithm}}

\section{Simulation study}\label{sec4}
We consider three relevant simulation studies to evaluate the empirical performance of the proposed methodologies in several scenarios, characterized by different types of dependence between $\boldsymbol{Y}$ and $X$. In particular, in a first scenario we generate the data to obtain sparse dependence structures in $\boldsymbol{Y}$, with these higher-order dependencies, along with the induced marginals,  being the same across the two groups defined by the variable $X$. The second scenario induces instead dependence between $\boldsymbol{Y}$ and $X$, by incorporating group differences in the marginals of $\boldsymbol{Y}$, along with variations in more complex higher-order structures, including a subset of the bivariates. Finally, the third scenario characterizes a challenging situation in which there are no changes in the marginals of $\boldsymbol{Y}$,  but only sparse group differences in the bivariates. Hence, the dependence between $\boldsymbol{Y}$ and $X$ is in fewer higher-order structures. The goal in defining these challenging simulation scenarios is to assess whether the proposed model can characterize probabilistic generative mechanisms having different properties, thereby ensuring accurate testing in broad settings. Consistent with this goal, we focus on $d_x=2$ groups, and $p=15$ categorical variables having $d_1=\cdots=d_{15}=4$ possible categories. Data  $(\boldsymbol{y}_i,x_i)$, $i=1, \ldots, n$, are simulated for $n=400$ units, whose group membership $x_i$ is generated from a categorical variable with probabilities $\pi_X^0(1)=0.5$ and $\pi_X^0(2)=1-\pi_X^0(1)=0.5$. The multivariate categorical responses  $\boldsymbol{y}_i$ are instead simulated from generative mechanisms incorporating the specific properties of the aforementioned scenarios.

In particular, in the first simulation scenario the generative mechanism associated with $\boldsymbol{Y}$ does not change with groups---i.e. $\boldsymbol{\pi}^0_{\boldsymbol{Y} \mid X=x}=\boldsymbol{\pi}^0_{\boldsymbol{Y}}$. However, to evaluate the flexibility of the proposed model, we define a challenging representation for $\boldsymbol{\pi}^0_{\boldsymbol{Y}}$, in which the subset of variables having indices in  $\mathcal{J} = (1,5,10,12,15)$ are generated from a  joint probability mass function with $\mbox{pr}(Y_1 = Y_5 = Y_{10} = Y_{12} = Y_{15}= y) = 0.1$, for each $y\in (1,\ldots,4)$, and the remaining probability mass of $0.6$ assigned in equal proportion to the other $4^5 - 4$  combinations of categories. The variables with indices in $\mathcal{J}^{c}$ are instead simulated independently from their corresponding marginal probability mass function $\boldsymbol{\pi}^0_{Y_j} \sim \mbox{Dir}(10,10,10,10)$. In the second simulation, we induce instead sparse group differences in marginals and bivariates. To incorporate this behavior,  we still simulate variables with indices in $\mathcal{J}^{c}$ independently, but force the marginals  of $Y_2$ and $Y_8$ to change with groups, by letting $\boldsymbol{\pi}^0_{Y_2\mid X=1}=\boldsymbol{\pi}^0_{Y_8\mid X=1}=(0.45,0.45,0.05,0.05)$, and $\boldsymbol{\pi}^0_{Y_2\mid X=2}=\boldsymbol{\pi}^0_{Y_8\mid X=2}=(0.05,0.05,0.45,0.45)$.  The variables with indices in $\mathcal{J}$ are instead generated as in the first scenario for $X=1$. When $X=2$ these variables are instead simulated independently from the marginal probability mass function $\boldsymbol{\pi}^0_{Y_{j \in \mathcal{J}}\mid X=2}=(0.25,0.25,0.25,0.25)$. As a result we  incorporate group differences in the marginals of $Y_2$ and $Y_8$, along with changes in the bivariates for any pair of variables including $Y_2$ or $Y_8$, and any pair $(Y_j,Y_{j'})$, with $j \in \mathcal{J}$, $j' \in \mathcal{J}$. In fact, note that, the joint generative mechanism for the variables with indices in $\mathcal{J}$ ensures that $\boldsymbol{\pi}^0_{Y_{j \in \mathcal{J}}\mid X=1}=(0.25,0.25,0.25,0.25)$. Therefore only the bivariates of these variables change with groups in the second scenario, whereas the marginals remain constant.  Consistent with this discussion, the third scenario  maintains the same generative process, with the exception  of assuming again $\boldsymbol{\pi}^0_{Y_2\mid X=1}=\boldsymbol{\pi}^0_{Y_2\mid X=2}$ and $\boldsymbol{\pi}^0_{Y_8\mid X=1}=\boldsymbol{\pi}^0_{Y_8\mid X=2}$ as in the first scenario. As a result, no group differences in the marginals are observed, and the dependence between $\boldsymbol{Y}$ and $X$ in the third scenario is only due to sparse group differences in the bivariates $(Y_j,Y_{j'})$, with $j \in \mathcal{J}$, $j' \in \mathcal{J}$.

Before studying the empirical performance, it is worth noticing that the above scenarios rely on generative mechanisms not explicitly related to the statistical model proposed in  \eqref{eq3}--\eqref{eq4}, thereby allowing a more effective validation of the flexibility of our methodologies, since the data are not generated from the model described in Section \ref{sec2}. The three scenarios are indeed more closely related to a log-linear model characterized by sparse and higher-order dependence structures, thus providing a challenging setting. To highlight the benefits associated with the proposed methodologies, we compare performance in global testing with the nonparametric approach of   \cite{pesarin2010}, and the latent class models \citep[e.g.][]{bolck2004}---estimating  the latent classes with the \texttt{R} package \texttt{poLCA}. The competitors in local testing are instead separate $\chi^2$ tests with and without false discovery rate control \citep[][]{benjamini1995}. Accurate and tractable inference under a log-linear model would be possible only by including the structure and restrictions of the above scenarios. However, these properties are not known a priori, and the focus of inference is actually learning these structures. Hence, due to the complex higher-order dependencies in the above simulations, an unstructured log-linear model would require a massive amount of parameters to incorporate these structures, thereby leading to intractable and inefficient inference in practice. Hence, we avoid comparison with log-linear models.

%This method is equivalent to consider a log-linear model where all the parameters regarding $\mathbf{Y}$ components interactions are $0$ except for the ones in the  subset $\mathcal{J}$.
%Although leading to the same result this last method is more difficult to implement in practice. 

\subsection{Performance in global and local testing}
We perform posterior inference under the proposed model \eqref{eq3}--\eqref{eq4} with priors defined in Section \ref{sec3}, setting $\alpha_1=\alpha_2=1/2$, $\gamma_{j1}=\cdots=\gamma_{jd_j}=1/d_j$ for each $j=1, \ldots, p$, and $\mbox{pr}(H_1)=\mbox{pr}(H_0)=0.5$. We maintained these default hyperparameters in all the three simulations to assess sensitivity to prior settings, observing no evidence that posterior inference is sensitive to these hyperparameter's choices. We consider $5000$ Gibbs samples and set a conservative upper bound $\bar{H}=20$, allowing the sparse Dirichlet prior $\Pi_{\boldsymbol{\nu}}$ to adaptively empty redundant mixture components \citep{rousseau2011}. Trace-plots suggest that convergence is reached after a burn-in  of $1000$. We additionally obtain very good mixing, with most of the effective sample sizes for the quantities of interest around $2400$ out of $4000$. 

Using the Gibbs samples for $T$, in the first simulation scenario, we obtain a posterior probability for the global alternative $\hat{\mbox{pr}}\{H_1 \mid (\boldsymbol{y}_1,x_1), \ldots, (\boldsymbol{y}_n,x_n)\}<0.05$\footnote{The estimated posterior probability of $H_1$ can be easily obtained as the relative frequency of the MCMC samples in which $T=1$.}, providing correct evidence of no group differences in the multivariate categorical data. We observe similarly accurate performance for the other two simulation scenarios, providing a posterior probability  $\hat{\mbox{pr}}\{H_1 \mid (\boldsymbol{y}_1,x_1), \ldots, (\boldsymbol{y}_n,x_n)\}>0.95$, which correctly highlights the global dependence between $\boldsymbol{Y}$ and $X$ in both scenarios. The permutation test proposed in  \cite{pesarin2010} provided correct results in the first two scenarios---when independence and dependence are evident from the marginals---but failed to reject $H_0$ with a $p$-value of $0.4$ in the third scenario. This is not surprising, as this procedure aggregates $p$-values of multiple tests assessing evidence of group differences in the marginals---which do not vary with groups in the third scenario. We additionally attempted the global testing procedure based on the latent class analysis  \citep{bolck2004}, estimating  the latent classes with the \texttt{R} package \texttt{poLCA}. Also this approach produced accurate conclusions in the first two scenarios. However, we found the results  quite unstable in the last case. This may be related to the systematic bias associated with this procedure as well as possible convergence issues in the expectation-maximization algorithm.

As shown in Figure \ref{fig:sim1}, our procedure provides also accurate results in assessing local group differences. Consistent with the three generative mechanisms of the simulated data, the posterior distributions for the coefficients $\rho_j$ provide evidence of group differences in the marginals only for  $Y_2$ and $Y_8$ in the second scenario. We obtain, in fact, $\hat{\mbox{pr}}\{\rho_j>0.2\mid (\boldsymbol{y}_1,x_1), \ldots,(\boldsymbol{y}_n,x_n)\}>0.95$\footnote{The estimated posterior probabilities of $H_{1j}$ and $H_{1jj'}$ can be easily obtained as the relative frequencies of the MCMC samples in which $\rho_{j}>0.2$, and $\rho_{jj'}>0.2$, respectively.} only for $j \in (2,8)$, in the second scenario. Similarly accurate performance is found in the local tests on the bivariates. Consistent with the first  scenario, the posterior distribution for the coefficient $\rho_{jj'}$ correctly highlight no group differences in the bivariates, with $\hat{\mbox{pr}}\{\rho_{jj'}>0.2\mid (\boldsymbol{y}_1,x_1), \ldots, (\boldsymbol{y}_n,x_n)\}<0.05$ for all $j=1,\ldots,p$ and $j^\prime = 1,\ldots,p$, with $j^\prime \neq j$. As expected, the changes in the two marginals observed in the second scenario, induce also group differences in the bivariates for pairs of variables including $Y_2$ or $Y_8$. We correctly learn also changes across groups in the joint probability mass function for pairs $(Y_j,Y_{j'})$ with $j \in \mathcal{J}$ and $j' \in \mathcal{J}$, $j'\neq j$, consistent with the settings of the second scenario. The same finding is obtained in the third simulation, correctly providing $\hat{\mbox{pr}}\{\rho_{jj'}>0.2\mid (\boldsymbol{y}_1,x_1), \ldots, (\boldsymbol{y}_n,x_n)\}>0.95$ only for  pairs $(Y_j,Y_{j'})$ with $j \in \mathcal{J}$ and $j' \in \mathcal{J}$, $j'\neq j$.

Local analyses via separate $\chi^2$ tests produced several false positives and false negatives when multiplicity control is not considered. Including a false discovery rate control at a level $0.10$ via \cite{benjamini1995}, improves the results, but still provides one false discovery for the local tests on the bivariates in the second scenario, and one false discovery for the local tests on the marginals in the third scenario. These empirical findings further support the proposed procedures, which gain power by borrowing information across the local tests, and incorporate an automatic multiplicity control via the hierarchical Bayesian formulation \citep[e.g.][]{scott2010}. In fact, according to the above results, the proposed methods effectively control the false discoveries, without requiring additional procedures. 

\begin{figure}[t!]
    \centering
    \includegraphics[width=17cm]{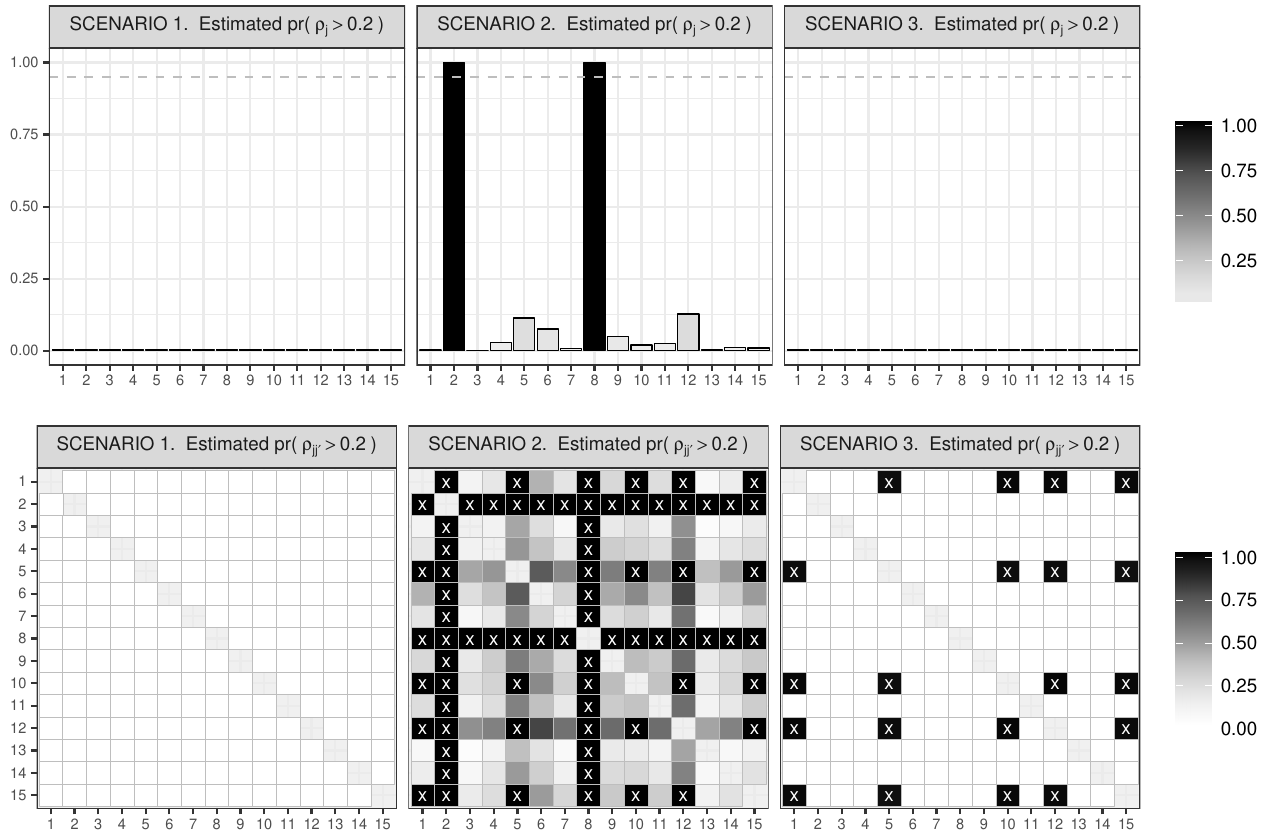}
  \caption{Performance in testing of local group differences. Upper panels: for the three simulation scenarios, posterior estimate of $\mbox{pr}(H_{1j})=\mbox{pr}(\rho_j>0.2)$, to assess evidence of group differences in the marginals $Y_j$, $j=1, \ldots, 15$. Lower panels: for the same scenarios, posterior estimate of $\mbox{pr}(H_{1jj'})=\mbox{pr}(\rho_{jj'}>0.2)$ to test for group differences in the bivariates $(Y_j,Y_{j'})$. The gray dashed lines in the upper panels represent the $0.95$ threshold on the posterior probability of the alternative. The \texttt{x} symbols in the lower panels denote instead those pairs of variables whose bivariates are declared to change across groups according to the proposed local tests---i.e. $\hat{\mbox{pr}}\{\rho_{jj'}>0.2\mid(\boldsymbol{y}_1,x_1), \ldots, (\boldsymbol{y}_n,x_n)\}>0.95$. }
    \vspace{-6pt}
  \label{fig:sim1}
\end{figure}

\section{Application to the 2016 American National Election Studies}\label{sec5}
We apply the proposed methodologies to a subset of the 2016 polls data from the American National Election Studies (\textsc{anes}) available at {\url{http://electionstudies.org/}}, and described in Section~\ref{sec1}. Recalling our motivating application, the dataset comprises $p=20$ categorical measurements of voters opinions and feelings for the two main candidates in the 2016 United States Presidential elections---namely Hillary Clinton and Donald Trump. These categorical data are available on a five item scale, and are collected for $n_1=567$ voters who chose Hillary Clinton during the 2016 Democratic Presidential primaries, and $n_2=386$ voters who expressed preference for Bernie Sanders. Consistent with the discussion in Section~\ref{sec1}, our aim is to understand if the voters feelings and opinions for  Hillary Clinton and Donald Trump, change with their preference for Hillary Clinton or Bernie Sanders expressed in the 2016 Democratic Presidential primaries. Although the  \textsc{anes} dataset provides additional information, and more elaborated analyses could be devised, our fundamental goal is to validate the proposed methods on an interpretable real-data application of potential interest in political studies. Indeed, qualitative political analyses of Presidential primaries are common \citep[e.g.][]{leduc2001,cain2015}, and---as discussed in Section~\ref{sec1}---there is an active debate about the possible effects of a different outcome in the 2016 Democratic Presidential primaries on the final 2016 United States Presidential elections \citep[e.g.][]{lil2016}. 

Focusing on our specific motivating dataset it is not clear---a priori---whether, and for which variables, underlying groups differences are present. In fact, the focus is on democratic voters sharing the same party affiliation. Therefore, their general opinions and feelings toward Hillary Clinton and Donald Trump, may remain substantially unchanged when comparing the subsets of voters expressing their finer--scale preference for one of the two alternative democratic candidates. On the other hand, the substantial differences characterizing the democratic candidates Hillary Clinton and Bernie Sanders \citep[e.g.][]{lil2016}, may have attracted subset of voters with different opinions and feelings toward Hillary Clinton and Donald Trump. However, it is not clear a priori whether the preference for Hillary Clinton and Bernie Sanders, is associated with different positive opinions and feelings for Hillary Clinton or varying negative evaluations of Donald Trump---or both. These considerations motivate the implementation of the statistical model and testing procedures described in Section \ref{sec2}, which are specifically developed to allow effective inference on group differences at varying scales. In accomplishing this goal, we perform posterior computations with the same settings of the simulation studies in Section \ref{sec4}. Also in this case we obtain convergence after a burn-in of $1000$ and good mixing, with most of the effective sample sizes around $2300$ out of $4000$. 

\begin{figure*}[t!] 
\centering
\subfigure[Different feeling towards Hillary Clinton and Donald Trump in the two groups of voters.]{\includegraphics[width=1.01\textwidth]{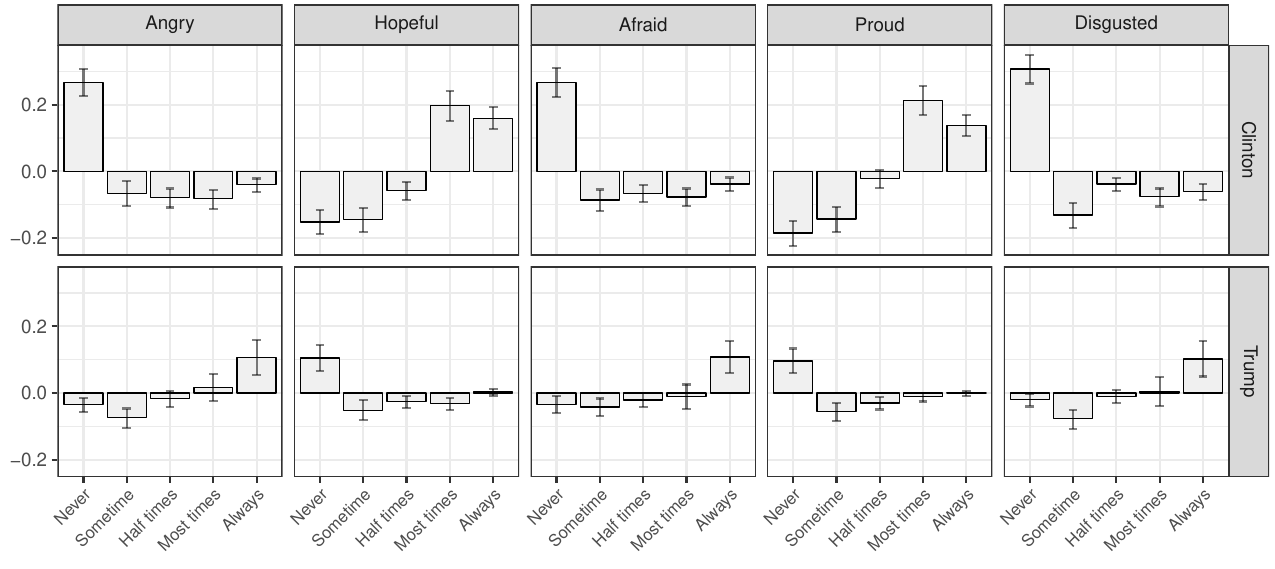} }
\subfigure[Assessments of different personality traits characterizing Hillary Clinton and Donald Trump in the two groups of voters.]{\includegraphics[width=1.01\textwidth]{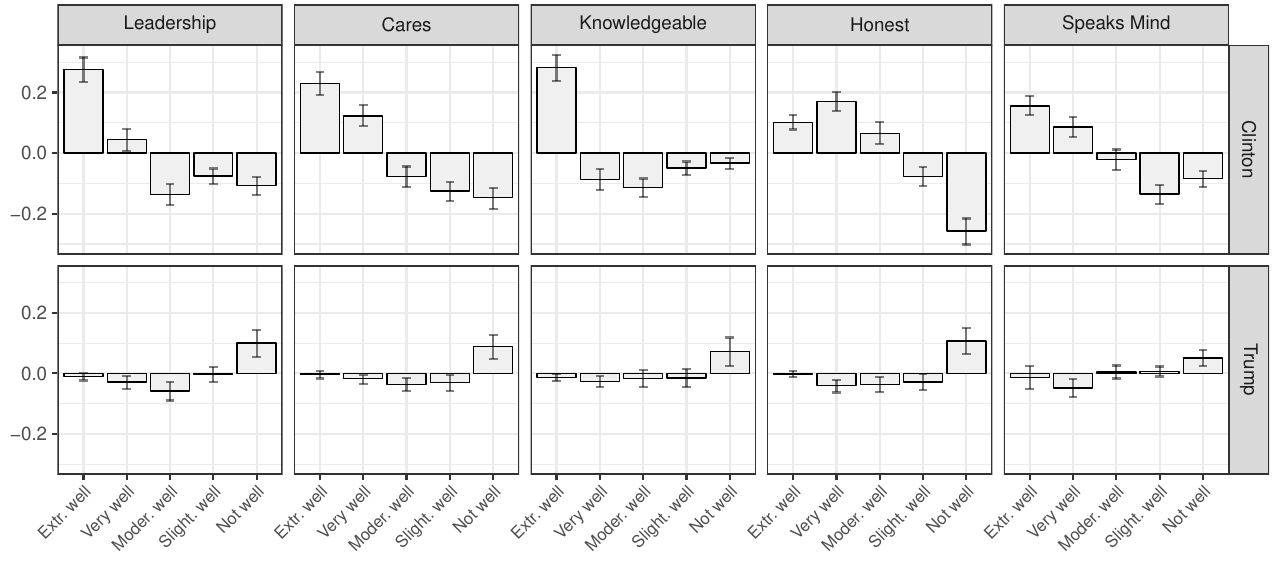} }
\caption{Posterior mean (gray bars), and $0.95$ credible intervals (gray segments) of the difference $\boldsymbol{\pi}_{Y_j\mid X=1}-\boldsymbol{\pi}_{Y_j\mid X=2}$ between the marginal probability mass functions of each qualitative variable in the groups of voters who chose Hillary Clinton and Bernie Sanders, respectively, during the 2016 Democratic Presidential primaries.}
\label{fig:inference_cpp}
\end{figure*}

Results from posterior inference offer interesting insights on group differences in voters opinions with $\hat{\mbox{pr}}\{H_1 \mid (\boldsymbol{y}_1,x_1), \ldots, (\boldsymbol{y}_n,x_n)\}>0.95$ providing  strong evidence of changes in opinions between Hillary Clinton and Bernie Sanders voters. To assess the robustness of this result, we also performed posterior inference based on datasets that randomly matched the observed voting preferences for Hillary Clinton or Bernie Sanders, with a corresponding vector of evaluations on the $p$ items, effectively removing the possibility of a dependence between ${\boldsymbol Y}$ and $X$. In 10 of these trials we always obtained $\hat{\mbox{pr}}\{H_1 \mid (\boldsymbol{y}_1,x_1), \ldots, (\boldsymbol{y}_n,x_n)\}\approx 0$, as expected.

\begin{figure}[h!]
    \centering
    \includegraphics[width=16.6cm]{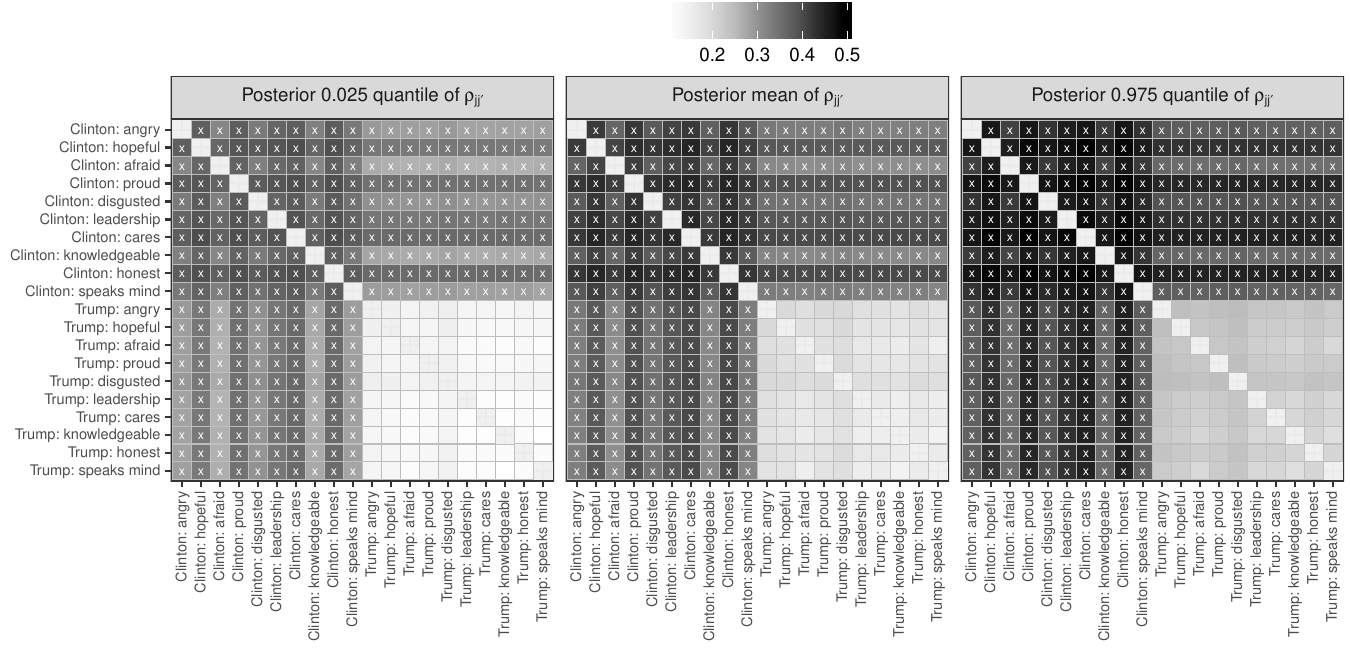}
 \caption{Posterior mean,  $0.025$ posterior quantile, and $0.975$ posterior quantile, for the Cramer's \textsc{v} coefficients $\rho_{jj'}$, measuring group differences in each bivariate. The \texttt{x} symbols denote those pairs of variables whose bivariates are declared to change across groups according to the proposed local tests---i.e. $\hat{\mbox{pr}}\{\rho_{jj'}>0.2\mid(\boldsymbol{y}_1,x_1), \ldots, (\boldsymbol{y}_n,x_n)\}>0.95$. }
    \vspace{-6pt}
  \label{fig:app2}
\end{figure}

The multiple local tests on the marginals  interestingly suggest that the above global variations are attributable to different feelings and opinions on Hillary Clinton. Evaluations of Donald Trump instead do not differ across groups with $\mbox{max}_{j\in \mathcal{J}_{\textsc{D}}}[\hat{\mbox{pr}}\{\rho_j>0.2\mid (\boldsymbol{y}_1,x_1), \ldots, (\boldsymbol{y}_n,x_n)\}]=0.078$, where $\mathcal{J}_{\textsc{D}}$ denotes the set of indices for the variables characterizing feelings and opinions on Donald Trump.  Figure \ref{fig:inference_cpp} clarifies these findings by summarizing the posterior distribution of the difference $\boldsymbol{\pi}_{Y_j\mid X=1}-\boldsymbol{\pi}_{Y_j\mid X=2}$ between the probability mass functions  characterizing the feelings and opinions on Hillary Clinton and Donald Trump, in Hillary Clinton and Bernie Sanders voters, respectively. Leveraging Proposition \ref{prop3}, these quantities are defined as $\pi_{Y_j\mid X=x}(y_j)=\sum_{h=1}^H \nu_{hx}\pi_{hj}(y_j)$, for each $y_{j} \in (1, \ldots, 5)$. Consistent with the local tests, the opinions on Donald Trump remain mostly constant across the two groups, whereas those for Hillary Clinton change. According to Figure \ref{fig:inference_cpp}, these group differences are reasonably due to more negative feelings and opinions expressed by Bernie Sanders voters on Hillary Clinton.

As shown in Figure  \ref{fig:app2}, the changes in the marginals induce also evident group differences in the probability mass function for pairs of variables including at least one assessment on Hillary Clinton. When studying the block of items related to the feelings and opinions on Donald Trump, we do not observe, instead, evidence of group differences in the bivariates. This is an interesting finding, which suggests that the democratic voters share the same joint opinions on the Republican candidates, and express their preference during the primaries mostly based on evaluations of the Democratic candidate, rather than considering their opinions on the potential Republican competitor in the subsequent Presidential elections. Indeed, we applying the model and methodologies described in Section \ref{sec2} only to the vector of items ${\boldsymbol Y}_{\mathcal{J}_{\textsc{D}}}$ measuring feelings and opinions on Donald Trump, we obtained a posterior probability for the global alternative $\hat{\mbox{pr}}\{H_1 \mid (\boldsymbol{y}_1,x_1), \ldots, (\boldsymbol{y}_n,x_n)\}\approx 0$, effectively proving the absence of group differences in the Republican candidate assessments, not only in the marginals and the bivariates, but also in higher-order combinations of items.

\section{Discussion}
Motivated by recent political election studies providing multivariate categorical data on voters opinions and preferences for Presidential candidates, we have developed a novel methodology for testing of group differences in multivariate categorical data at different scales. The proposed procedures rely on a single statistical model based on tensor factorizations, thereby allowing inference and testing on several underlying structures, within a coherent methodological framework. Although this goal can be also accomplished in log-linear models, the proposed group-dependent mixtures of tensor factorizations substantially reduce dimensionality and provide tractable testing procedures, while crucially preserving flexibility---as proved in theoretical studies. These key properties are directly related to the effective borrowing of information within the mixture representation, which additionally induces dependence among the different tests, thus allowing improved power compared to separate univariate tests. Taking a Bayesian approach to inference, we additionally incorporate adaptive selection of the model dimension, and automatic multiplicity control via carefully specified priors. The simulation studies, and the real-data application provide empirical guarantee of the above properties, and highlight improved performance when compared to popular alternatives.

The proposed methods are applicable in broad settings, including unordered and ordered multivariate categorical data. Indeed, in the motivating real-data application there is a natural ordering among the categories of the observed items, which may motivate inclusion of additional structure to incorporate  order restrictions \citep[e.g.][]{agresti2001}. Although  these properties can be easily incorporated within the multinomial kernels in \eqref{eq4}, we avoided additional complications to maintain the model general and fully flexible. In fact, there is no guarantee---a priori---that the ordering in the categories is translated into order restrictions for the probabilistic generative mechanisms of the associated  variables. Another promising direction of research is to incorporate additional dimensionality reduction in equation \eqref{eq4}. In particular, although the proposed statistical model massively reduces the number of parameters compared to log-linear representations, inference may be still cumbersome in large and sparse tensors. A possibility to address this issue is to exploit additional sparsity, by adapting representation \eqref{eq4} to incorporate the recently developed  sparse--\textsc{parafac} model proposed in \citet{Zhou2014}.

\section*{References}

%\bibliography{biblio.bib}

\end{document}